\documentclass[10pt,9pt,conference]{IEEEtran}
\usepackage[T1]{fontenc}
\usepackage[latin9]{inputenc}
\usepackage{geometry}
\geometry{verbose,tmargin=53.9pt,bmargin=54.1pt,lmargin=54pt,rmargin=54pt}
\usepackage{verbatim}
\usepackage{float}
\usepackage{amsmath}
\usepackage{amsthm}
\usepackage{amssymb}
\usepackage{graphicx}

\makeatletter
 \theoremstyle{definition}
 \newtheorem*{defn*}{\protect\definitionname}
  \theoremstyle{definition}
  \newtheorem{defn}{\protect\definitionname}
  \theoremstyle{plain}
  \newtheorem{prop}{\protect\propositionname}
  \theoremstyle{plain}
  \newtheorem{lem}{\protect\lemmaname}
  \theoremstyle{remark}
  \newtheorem{rem}{\protect\remarkname}
\theoremstyle{plain}
\newtheorem{thm}{\protect\theoremname}

\usepackage{pbox}
\usepackage{multirow}
\usepackage{cite}
\usepackage{float}
\usepackage{endnotes}
\usepackage{color}

\title{\vspace{12pt}Coalitional game with opinion exchange}
\author{Bomin Jiang, Mardavij Roozbehani and Munther A. Dahleh
\thanks{B. Jiang, M. Roozbehani and M.A. Dahleh are with Laboratory for Information and Decision Systems, Institute for Data, Systems, and Society, Massachusetts Institute of Technology. 
M.A. Dahleh is also with the Department of Electrical Engineering and Computer Science. 
{\tt\small \{bominj, mardavij, dahleh\}@mit.edu}}%
}

\makeatother

  \providecommand{\definitionname}{Definition}
  \providecommand{\lemmaname}{Lemma}
  \providecommand{\propositionname}{Proposition}
  \providecommand{\remarkname}{Remark}
\providecommand{\theoremname}{Theorem}

\begin{document}
\maketitle
\begin{abstract}
In coalitional games, traditional coalitional game theory does not
apply if different participants hold different opinions about the
payoff function that corresponds to each subset of the coalition.
In this paper, we propose a framework in which players can exchange
opinions about their views of payoff functions and then decide the
distribution of the value of the grand coalition. When all players
are truth-telling, the problem of opinion consensus is decoupled from
the coalitional game, but interesting dynamics will arise when players
are strategic in the consensus phase. Assuming that all players are
rational, the model implies that, if influential players are risk-averse,
an efficient fusion of the distributed data is achieved at pure strategy
Nash equilibrium, meaning that the average opinion will not drift.
Also, without the assumption that all players are rational, each player
can use an algorithmic R-learning process, which gives the same result
as the pure strategy Nash equilibrium with rational players.
\end{abstract}

\section{Introduction}

In recent years, the application of game theory in multi-agent systems
has been receiving increasing attention. Those applications include
task allocation \cite{shehory1998methods}, smart grids \cite{saad2012game},
transportation networks \cite{saad2011coalition}, sensor placement
\cite{jiang2015optimal}, and so on. Although the theory of coalitional
games has existed for a few decades, theories for the case of unrealized
payoff functions (of subsets of players) are quite limited. In most
of the literature that considers coalitional game theory, an oversimplified
assumption is used, i.e., that all players agree on a common sub-coalition
payoff function.

Recently, researchers have started to look at this case in a variety
of ways, e.g., using the model of Bayesian games, bargaining games,
or repeated playing dynamic games. One paper \cite{ieong2008bayesian}
derived a model that generalizes coalitional games to a Bayesian framework
using types. Furthermore, a Bayesian core contract is defined as the
set of contracts of payoff distributions (Note that a contract is
an agreement among players about the payoff functions of players once
the type is realized) that are non-blocking under the expected value
of payoffs of players, whether Ex ante, Ex interim, or Ex post. Note
that non-blocking means one player is better off staying in the grand
coalition, so this player would not block the formation of such a
coalition. Similarly, another paper \cite{hooper2010coalition} defined
the concept of Bayesian core under uncertainty and gave a bargaining
algorithm that converges to the Bayesian core, assuming that it exists.
However, there are two practical issues with the setting. First, the
theory says nothing when such a core does not exist; second, even
if it does exist, people\textquoteright s individual observations,
which are private, are not used constructively because they do not
exchange information on private payoff functions. By exchanging information,
everyone can obtain a better estimate of the ground truth of the payoff
function. In addition, players will not follow the algorithm in the
literature when they are strategic and want the algorithm to converge
to some value in the core that favors them. Finally, a fair distribution,
such as the Shapley value in the classical coalitional game model,
is not well defined because a commonly-accepted payoff function may
not exist. Another paper, \cite{chalkiadakis2012sequentially}, used
a repeated playing model and assumed that players learn the actual
state of the world as the game goes on, but, in practice, states may
never converge if the game that is being played is changing rapidly
over time or, even worse, if the game is only played once. 

In reality, the realization of such a subcoalition payoff function
may involve opinion consensus, i.e., people\textquoteright s views
of each other are affected by each other, and consensus eventually
reveals the truth. However, to date, there has been virtually no work
on the interplay between coalitional games and opinion consensus theory.
This paper takes an initial step in this direction and shows that
this model gives rise to several interesting implications parallel
to many social phenomena. As noted before, in this model, players
obtain a better estimation of the ground truth of the payoff function
by exchanging information; a fair value distribution (i.e., the Shapley
value) is also well defined given some conditions for efficient opinion
exchange that are stated in the paper.

The proposed framework of the coalitional game with information exchange
results in three interesting phenomena that relate to psychology and
sociology. First, at the equilibrium of this game, each participant
should be a little overconfident by exaggerating their own contribution
in the coalition. Second, in a rational player setting, if the members\textquoteright{}
influences in a network are proportional to their risk-averse levels,
the opinion exchange process is efficient, i.e., it is beneficial
to an organization as a whole if more responsible people are taking
more important positions. Gradual opinion exchange, instead of an
instant opinion fusion, is necessary when players are not fully rational.%
{} 

The rest of this paper is organized as follows: Section II discusses
existing models on coalitional games and opinion consensus. In this
section, Proposition 1 shows that with large number of players, the
Bayesian core as defined in \cite{chalkiadakis2007coalition} is likely
to be empty, and Proposition 2 gives conditions for which linear opinion
consensus could give non-empty Bayesian core. However, the linear
opinion consensus models do not consider self-interested players,
so a modified model with self-interested players is also discussed.
In addition, Section III discusses system dynamics with self-interested
players in coalitional games with opinion exchange. In this section,
Theorem 1 gives conditions for which a coalitional game with opinion
exchange is efficient. Furthermore, conditions for which a Bayesian
core is non-empty are given in Proposition 4 with a consensus assumption,
and Theorem 2 without the consensus assumption. Additionally, Section
IV shows that an R-learning algorithm can provide a player the best
strategy when other players are not rational, i.e., when other players'
behaviors have to be learned. Additionally, potential real world applications
in both equity distribution and legislation lobbying are given in
Section V. Finally, Section VI gives concluding remarks and discusses
further work. 

\section{Coalitional games and opinion consensus models}

In a classical coalitional game, it is assumed that the sub-coalition
payoff function is common knowledge for all players. In this paper,
this oversimplified assumption is removed, and different sub-coalition
payoff functions are allowed. Those different sub-coalition payoff
functions represent different evaluations of other players\textquoteright{}
abilities, i.e., they are private opinions. As indicated in much of
the social science literature, people\textquoteright s opinions can
affect each other substantially \cite{chapman2000power}. Thus, such
opinion exchange requires a new coalitional game model. Our paper,
in particular, uses the linear opinion consensus model \cite{albi2014boltzmann}
as a tool to investigate opinion exchange in coalitional games. Informally,
players first carry out opinion consensus, and then they play the
classical coalitional game to decide the fair payoff distribution;
a rigorous mathematical model of this process is given later in this
paper. However, the coalitional game with opinion exchange is more
than a coalitional game after opinion consensus; during the opinion
consensus process, each participant is incentivized by her or is final
payoff in the coalitional game and may tell lies. That interaction
generates a coupling between the coalitional game and the opinion
consensus. 

\subsection{Notations and definitions}

This subsection reviews notations and definitions used in classical
coalitional games and opinion consensus models. 
\begin{defn*}
{[}Supermodularity{]} Let $N=\{1,2,\cdots,n\}$ be a set of consecutive
integers. Suppose $f(\cdot)$: $2^{N}\rightarrow\mathbb{R}$ is a
set function. The set function $v(C)$ is supermodular iff one of
the following equivalent conditions holds

1. $\forall X\subseteq Y\subseteq N$ and $x\in N\backslash Y$, there
holds $f(X\cup\{x\})-f(X)\leq f(Y\cup\{x\})-f(Y)$, or

2. $\forall X,Y\subseteq N$, there holds $f(X\cup Y)+f(X\cap Y)\geq f(X)+f(Y)$. 

The set function is said to be strictly supermodular if the inequalities
in the above two equations are strict. 
\end{defn*}
\begin{defn*}
{[}Stochastic Matrix{]} A matrix $W=[w_{ij}]$ is called a stochastic
matrix iff 

1. $\forall i,j$, $w_{ij}\geq0$, and

2. $\forall i$, there holds $\sum_{j}w_{ij}=1$
\end{defn*}

\subsection{Coalitional game}

The idea of linear consensus has been used extensively in both engineering
systems \cite{Jiang2017Coalition} and social networks \cite{albi2014boltzmann}.
Let $N=\{1,2,\cdots,n\}$ be a set of $n$ players. In the classical
coalitional game setting, a subset $C\subseteq N$ is called a sub-coalition.
A set function $v(C):2^{N}\rightarrow\mathbb{R}$ of the subcoalition
gives the payoff if sub-coalition $C$ is formed. Note that the cardinality
of $\{C|C\subseteq N\}$ is finite, so $v(C)$ can be represented
as a vector, $v$. 

In a coalitional game, we consider two major questions: 

1. Is there a payoff allocation such that everyone is better-off in
the grand coalition? (This problem is solved by the notion of core),
and

2. Is there a payoff allocation which is fair to everyone? (This problem
is solved by the notion of Shapley value). 

The core of a coalitional game is the set of payoff allocation, $g_{i}$,
$i\in N$:

\[
\{g_{i}|\forall C\subseteq N,\sum_{i\in C}g_{i}\geq v(C),\mbox{ and }\sum_{i\in N}g_{i}=v(N)\}
\]

If the core of a coalitional game is not empty, then the coalitional
game has a stable solution, such that everyone is better off staying
in the grand coalition. Furthermore, the core is non-empty as long
as the payoff function is supermodular. 

The Shapley value is a payoff allocation, $g_{i}$, derived from three
fairness principles, i.e., symmetry, linearity, and null player. This
value is given by
\[
g_{i}=d_{i}(v)=\sum_{C\subseteq N\setminus\{i\}}\frac{|C|!(n-|C|-1)!}{n!}(v(C\cup\{i\})-v(C))
\]
The Shapley value defines a fair distribution of the total payoff
$v(N)$. 

Neither the stable core nor the Shapley value is well defined without
a subcoalition payoff function, $v(C)$, which is commonly accepted
among all players. However, the notion of a stable core is generalized
in \cite{chalkiadakis2004bayesian} and \cite{ieong2008bayesian}
as a Bayesian core, where private sub-coalition payoff functions are
assumed. Note that the definitions of the Bayesian core are different
in the above two papers, and the definition in our paper is similar
to that in \cite{chalkiadakis2004bayesian}. 
\begin{defn}
The Bayesian core is defined as the set of value distributions, $d_{i}$,
such that every player is better off staying in the grand coalition.
Mathematically, \textquotedblleft better off\textquotedblright , or
rationality, is defined as
\begin{equation}
\forall i,\forall C\subsetneqq N,\sum_{j\in C}g_{j}\geq v_{i}(C)\label{eq:better off}
\end{equation}
where $v_{i}$ is private information of player $i$, charactering
his or her unique opinion of the game. Furthermore, there holds the
budget constraint
\begin{equation}
\forall i,\sum_{j\in N}g_{j}\leq v_{i}(N)\label{eq:budget}
\end{equation}
Now, a value distribution, $d_{i}$, is in the Bayesian core iff both
\eqref{eq:better off} and \eqref{eq:budget} hold. 

The problem with the setting of the Bayesian coalitional game is that,
in many cases, the Bayesian core is empty even though the core is
not empty for each player $i$. That is particularly true if the number
of players, $n$, is large, as illustrated by Lemma \ref{lem: no core}. 
\end{defn}
\begin{prop}
\label{lem: no core} Suppose that there is a strictly supermodular
ground truth payoff function $v(C):2^{\Omega}\rightarrow\mathbb{R}$
(Note one can represent the function as an $m$-vector). Further suppose
each player's opinion $v_{i}\sim N(v,\Sigma_{i})$ is a sample from
the ground truth payoff function, where $N(v,\Sigma_{i})$ represents
a truncated normal distribution with support $v_{i}\in\left\{ V|V\text{ vectorize }v_{i}(C)\text{ and }v_{i}(C)\text{ is supermodular}\right\} $,
and $\Sigma_{i}$ is a diagonal matrix with diagonal entries $\sigma_{i}^{2}>0$.
As the number of players increases, i.e. $n\rightarrow\infty$, the
Bayesian core defined by \eqref{eq:better off} and \eqref{eq:budget}
is empty with probability 1. 
\end{prop}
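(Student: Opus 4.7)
The plan is to distill the definition of the Bayesian core into a scalar necessary condition on each coordinate $g_{k}$, and then to invoke Gaussian extreme-value theory to show that this condition is violated with probability tending to one as $n\to\infty$. First I would fix any player $k\in N$; applying \eqref{eq:better off} with $C=N\setminus\{k\}$ gives $\sum_{j\neq k}g_{j}\geq\max_{i}v_{i}(N\setminus\{k\})$, and combining this with the budget constraint \eqref{eq:budget}, $\sum_{j\in N}g_{j}\leq\min_{i}v_{i}(N)$, and subtracting yields $g_{k}\leq\min_{i}v_{i}(N)-\max_{i}v_{i}(N\setminus\{k\})$. Instantiating \eqref{eq:better off} at the singleton $C=\{k\}$ produces the matching lower bound $g_{k}\geq\max_{i}v_{i}(\{k\})$. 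A non-empty Bayesian core therefore requires, for every $k\in N$,
\[
\max_{i}v_{i}(\{k\}) + \max_{i}v_{i}(N\setminus\{k\}) \;\leq\; \min_{i}v_{i}(N).
\]

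The second step is to show that this inequality fails asymptotically. For each fixed coordinate $C$ the samples $v_{1}(C),\ldots,v_{n}(C)$ are independent (since different players' opinion vectors are drawn independently) with means $v(C)$ and variances bounded below by $\underline{\sigma}^{2}=\min_{i}\sigma_{i}^{2}>0$. Standard Gumbel-type estimates for Gaussian extremes then give, with probability tending to one,
\[
\max_{i}v_{i}(\{k\}) + \max_{i}v_{i}(N\setminus\{k\}) - \min_{i}v_{i}(N) \;\geq\; \bigl(v(\{k\})+v(N\setminus\{k\})-v(N)\bigr) + c\,\underline{\sigma}\sqrt{\log n}
\]
for some absolute $c>0$. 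Applying the supermodularity inequality with $X=\{k\}$ and $Y=N\setminus\{k\}$, together with the standard normalization $v(\emptyset)=0$, makes the ground-truth offset in parentheses non-positive and fixed in $n$, whereas the $\sqrt{\log n}$ correction diverges; the right-hand side is therefore strictly positive for all large $n$, violating the necessary condition. A routine Borel--Cantelli argument along the sequence $n\to\infty$ upgrades this to failure almost surely in the limit.

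The chief technical subtlety is that each $v_{i}$ is drawn from a Gaussian truncated to the cone of supermodular vectors, so the marginal of a single coordinate $v_{i}(C)$ is not a one-dimensional truncated scalar Gaussian and will exhibit some dependence between entries within the same opinion vector. Because $v$ is \emph{strictly} supermodular it lies in the interior of the cone, so a full-dimensional ball around $v$ is contained in the truncation support; a rejection-sampling (equivalently change-of-measure) comparison then shows that the relevant upper and lower marginal tails agree with those of the untruncated scalar Gaussian up to a bounded multiplicative factor, which is enough to preserve the $\sqrt{\log n}$ scale of the extreme-value asymptotics invoked above and to close the argument.
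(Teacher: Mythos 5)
Your proof is correct and follows essentially the same route as the paper's: both combine the rationality constraint \eqref{eq:better off} on a two-set partition of $N$ (yours is the specific partition $\{k\}$, $N\setminus\{k\}$; the paper's is an arbitrary $S_{1},S_{2}$ with players $i,j,k$) with the budget constraint \eqref{eq:budget}, and then argue that extremes of the $n$ independent Gaussian samples force $\sum_{j}g_{j}$ above some player's $v_{k}(N)$. Your handling of the truncation to the supermodular cone and of the $\sqrt{\log n}$ extreme-value scaling is in fact more careful than the paper's one-line probabilistic claim, but it is elaboration of the same idea rather than a different argument.
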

\begin{proof}
Proof by contradiction. Let $S_{1}$ and $S_{2}$ be partitions of
$N$. i.e., $S_{1}\cap S_{1}=\emptyset$ and $S_{1}\cup S_{2}=N$.
Because each player takes a sample from a Gaussian distribution, 
\[
\lim_{n\rightarrow\infty}\mathbb{P}\{\exists i,j,k\text{, s.t. }v_{i}(S_{1})>\frac{1}{2}v_{k}(N)\text{ and }v_{j}(S_{2})>\frac{1}{2}v_{k}(N)\}\rightarrow1
\]
 The ``better off'' condition defined by \eqref{eq:better off}
gives
\[
v_{k}(N)<v_{i}(S_{1})+v_{j}(S_{2})\leq\sum_{p\in S_{1}}g_{p}+\sum_{p\in S_{2}}g_{p}=\sum_{p\in N}g_{p}
\]

Now the inequality $\sum_{p\in N}g_{p}>v_{k}(N)$ contradicts the
budget constraint defined by \eqref{eq:budget}. 
\end{proof}
In the above proposition, even if each sampled value function has
non-empty core, the game itself has empty Bayesian core. 

\subsection{Opinion consensus}

\label{subsec:opinion consensus}

The idea of linear consensus has been used extensively in both engineering
systems \cite{jiang2017simultaneous} and social networks \cite{albi2014boltzmann}.
Suppose a graph $\mathcal{G}=\{N,E\}$ characterizes the opinion influence
among a set of players $N$. There is an edge $e_{ij}\in E$ with
weight $w_{ij}\in(0,1)$ if player $i$ has a influence on player
$j$'s opinion. If there is not an edge between $i$ and $j$, we
set $w_{ij}=0$. Furthermore, $w_{ii}=1-\sum_{j\neq i}w_{ij}\geq0$.
At each time instance $t_{k}$, player $i$ hold an opinion of the
function $v_{i}(C)[k]$. In fact, because the cardinality of $\{C|C\subseteq N\}$
is finite, one can consider $v_{i}(C)[k]$ as a vector $v_{i}[k]$.

In the classical opinion consensus literature, all players are truth-telling.
When all players are truth- telling, the problem of opinion consensus
is decoupled from the coalitional game. Players just update their
opinions according to the linear opinion dynamics defined by

\[
v_{i}[k]=\sum_{j}w_{ij}v_{j}[k-1]
\]
In the above system, opinion consensus can be achieved, i.e., $\forall i$,
the limit $\lim_{k\rightarrow\infty}v_{i}[k]$ exists and $\forall i,j,\lim_{k\rightarrow\infty}v_{j}[k]=\lim_{k\rightarrow\infty}v_{j}[k]$,
iff the stochastic matrix $W=[w_{ij}]$ has one eigenvalue of 1 and
all other eigenvalues are strictly in the unit disk. If the opinion
consensus can be achieved, one can define a consensused payoff function
$v=\lim_{k\rightarrow\infty}v_{i}[k]$. After the opinion consensus,
players can play the coalitional game and a grand coalition exists
iff the stable core of $v$ is non-empty. A set of sufficient conditions
for which the stable core of $v$ is non-empty is given by Proposition
\ref{lem: truth telling core nonempty} below. 
\begin{prop}
\label{lem: truth telling core nonempty} There exists a stable coalition
under the consensused payoff function $v$ if all of the prior payoff
functions $v_{i}[1]$ are supermodular set functions. 
\end{prop}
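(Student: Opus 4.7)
The plan is to exploit the fact that supermodularity is preserved both under non-negative linear combinations and under pointwise limits, and then invoke the already-cited fact that a supermodular set function has a non-empty core.

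First I would verify that supermodularity is closed under non-negative linear combinations. This follows immediately from condition~2 of the supermodularity definition: if $f_1,\ldots,f_n\colon 2^N\to\mathbb{R}$ are supermodular and $\alpha_1,\ldots,\alpha_n\ge 0$, then for any $X,Y\subseteq N$,
\[
\sum_{j}\alpha_j\bigl(f_j(X\cup Y)+f_j(X\cap Y)\bigr)\ \ge\ \sum_{j}\alpha_j\bigl(f_j(X)+f_j(Y)\bigr),
\]
so $\sum_j\alpha_j f_j$ is supermodular. Since $W$ is stochastic, each row $(w_{ij})_j$ is a non-negative combination, so if every $v_j[k-1]$ is supermodular then so is every $v_i[k]=\sum_j w_{ij}v_j[k-1]$.

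Next I would use induction on $k$: by hypothesis the base case $k=1$ holds, and the previous paragraph provides the inductive step, so $v_i[k]$ is supermodular for every player $i$ and every time $k$. Since the supermodularity inequality is a non-strict inequality between finitely many values of the function, it is preserved under pointwise limits; applying this to the assumed limit $v=\lim_{k\to\infty}v_i[k]$ (which exists and is independent of $i$ by the standing assumption stated just before the proposition) yields that $v$ is itself supermodular.

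Finally, I would appeal to the classical result recalled earlier in the paper that the core of a supermodular coalitional game is non-empty, concluding that the consensused game admits a stable grand coalition. I do not expect a genuine obstacle here: the only subtle point is making sure that the consensus limit actually exists and is common to all players, which is exactly the hypothesis built into the definition of the consensused payoff function $v$ in Section~\ref{subsec:opinion consensus}; everything else is a short preservation-under-limits argument on top of the standard supermodular-implies-nonempty-core fact.
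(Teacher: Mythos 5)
Your proposal is correct and follows essentially the same route as the paper's proof: induction on $k$ using closure of supermodularity under non-negative (stochastic) combinations, closure under pointwise limits to conclude the consensus limit $v$ is supermodular, and the standard fact that supermodular games have non-empty core. The only difference is that you spell out the linear-combination step explicitly via condition~2 of the definition, which the paper merely asserts.
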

\begin{proof}
First, note that if $v_{i}[k-1]$ is supermodular, then $v_{i}[k]=\sum_{j}w_{ij}v_{j}[k-1]$
is also supermodular. Since $v_{i}[1]$ is supermodular, by induction,
it follows that $\forall k$, $v_{i}[k]$ is supermodular. 

Because the set of supermodular set functions is closed, the consensused
payoff function $v=\lim_{k\rightarrow\infty}v_{i}[k]$ is supermodular.
Since the stable core is non-empty as long as the payoff function
is supermodular, a stable coalition exists under the consensused payoff
function. 
\end{proof}
In reality, however, because players are incentivized by the payoff
distributions in the coalitional game, they do not necessarily tell
the truth during the opinion consensus process. To incorporate this
strategic aspect of opinion consensus, a better model is required
\cite{tanaka2017faithful}. Assume that, at each time instance, player
$i$ reveals an opinion, $x_{i}[k]$, that may or may not be equal
to $v_{i}[k]$. Define $\theta\in(0,1)$ as a trust parameter. Now
each player updates his or her opinion according to the linear opinion
dynamics defined by

\begin{equation}
v_{i}[k]=\theta\sum_{j}w_{ij}x_{j}[k-1]+(1-\theta)v_{i}[k-1].\label{eq: lie dynamic}
\end{equation}
The above opinion dynamics will be discussed in the next section. 

\section{System dynamics with strategic players \label{sec:System-dynamics-with}}

As pointed out in Section \ref{subsec:opinion consensus}, the system
dynamics are trivial when all players are truth-telling because the
opinion consensus and coalitional game are decoupled. This chapter
discusses the opinion dynamics when players may tell lies to get themselves
better payoff distributions. We refer to such players as strategic
players.

\subsection{Enforcing effective information exchange}

In the rational player setting, if telling a lie has no cost, the
game becomes a cheap-talk game \cite{farrell1987cheap}. Players will
not trust any information, and there is no efficient information exchange.
Similar problems exist under the cognitive hierarchical model \cite{camerer2004cognitive},
where the opinions will not reach consensus because the second level
players again form a cheap-talk game, even though the first level
players may tell the truth. We would like to investigate this type
of bounded rationality models in future work.

In our coalitional game setting, each player's private knowledge
of the sub-coalition payoff function can be viewed as a sample of
the ground truth. People enter the opinion consensus to acquire information
on other samples, and hence acquire a better understanding of the
ground truth. However, revealing false information to others will
introduce bias. Hence, it is useful to introduce a disutility when
false information is revealed so that players become risk-averse,
and effective information exchange is established.

\subsection{Rational and risk-averse players\label{subsec:Fully-rational-players:}}

Consider a ground truth payoff function $v(\cdot)$. Suppose it is
normalized, i.e. $v(\emptyset)=0$ and $v(N)=1$. When we write it
in its vector form, each entries in $v$ are $v(C),\emptyset\subsetneq C\subsetneq N$
(hence it is an $m$-vector, $m=2^{N}-2$). Note $v(C),\emptyset\subsetneq C\subsetneq N$
is unknown to players. Further suppose that each player's private
initial opinion at time instance $k=0$ is an i.i.d. sample $v_{i}[0]\sim N(v,\Sigma_{i})$,
where $\Sigma_{i}=\sigma_{i}^{2}I$ and $I$ is the identity matrix.
Let the influence among players denoted by $W=\left[w_{ij}\right]$.
In addition, define weight of opinions as $t_{i}$ such that $\lim_{k\rightarrow\infty}W^{k}=1_{n}\left[\begin{array}{cccc}
t_{1} & t_{2} & \cdots & t_{n}\end{array}\right]$ ($1_{n}$ denotes $n$ dimensional column-1-vector), then $\hat{v}[0]=\mathbb{A}[v_{i}[0]]=\sum_{i}t_{i}v_{i}[0]$
is an unbiased estimator of $v$. If it happens that $t_{i}=\frac{1}{\sigma_{i}^{2}}/\sum_{i}\frac{1}{\sigma_{i}^{2}}$,
then $\hat{v}[0]$ is the ML estimator. Note that $\mathbb{A}[\cdot]$
denotes weighted average. Note $[w_{ij}]$ and $t_{i}$ are also common
knowledge; the only private information to player $i$ is its opinion
$v_{i}[k]$.

Suppose the payoff $v(N)$ is allocated according to Shapley value
$d_{i}(\hat{v}[K])$ of the average opinion $\hat{v}[K]$ at step
$K$. Because the Shapley ratio defines a linear function, we can
also refer this final payment to player $i$ as $\frac{1}{n}+d_{i}^{T}\hat{v}[K]$,
where $d_{i}^{T}$ is a $m$-vector. Note the property of Shapley
value gives $\sum_{i}d_{i}^{T}=0$. If every player is truth telling,
then the system reaches consensus, i.e. $\forall i,\lim_{k\rightarrow\infty}v_{i}[k]=\hat{v}$,
and the final payoff function is an unbiased estimator.

Now, assume that players can tell lies. Each player may introduce
some fraud at step $k$: $u_{j}[k]=x_{j}[k]-v_{j}[k]$, but, at the
same time, these fraudulent statements undermine trust in the system,
and, hence, they introduce disutility $\boldsymbol{1}^{T}\text{var}[u]$,
where $\text{var}[u]=\sum_{i}t_{i}\left(u_{i}[k]\right)^{2}-\left(\sum_{i}t_{i}u_{i}[k]\right)^{2}$
(Note that this disutility metric is a scalar, and it can be interpreted
as the 1-norm of the variance of $u_{i}[k]$). After $K$ steps of
playing, the overall disutility due to fraud is given by $\boldsymbol{1}^{T}\sum_{k=1}^{K}\text{var}\left[u[k]\right]$.
Each player makes a trade-off between $\boldsymbol{1}^{T}\sum_{k=1}^{K}\text{var}\left[u[k]\right]$
and $d_{i}(\mathbb{A}[v[K]])$ by solving the minimization problem

\begin{equation}
\arg\min_{u_{i}[k],k=1,2,\cdots}p_{i}\cdot\boldsymbol{1}^{T}\sum_{k=1}^{K}\text{var}\left[u[k]\right]-d_{i}^{T}\mathbb{A}[v[K]]\label{eq: objective}
\end{equation}
where $p_{i}$ is the risk-averse factor of player $i$. 
\begin{lem}
\label{lem:v to u}In the coalitional game with opinion exchange (which
follows the system dynamics \eqref{eq: lie dynamic}), there holds
\[
\mathbb{A}[v[K]]=\theta\mathbb{A}[u[K-1]]+\mathbb{A}[v[K-1]]=\sum_{k}\theta\mathbb{A}[u[k]]+\mathbb{A}[v[0]]
\]
\end{lem}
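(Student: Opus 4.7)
The plan is to exploit the fact that the weights $t_i$ used by $\mathbb{A}[\cdot]$ are, by construction, the left-eigenvector of $W$ for eigenvalue 1. From $\lim_{k\to\infty} W^k = 1_n [t_1,\dots,t_n]$, the limit $P := 1_n t^T$ must satisfy $PW = P$ (take the limit of $W^{k+1} = W^k W$), and this row-by-row gives the identity $\sum_i t_i w_{ij} = t_j$. This is the only algebraic fact needed beyond the defining recursion.

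With that in hand, I would compute $\mathbb{A}[v[K]]$ by substituting the dynamics \eqref{eq: lie dynamic}. First,
\[
\mathbb{A}[v[K]] = \sum_i t_i v_i[K] = \theta \sum_i t_i \sum_j w_{ij} x_j[K-1] + (1-\theta)\sum_i t_i v_i[K-1].
\]
Swapping the order of summation in the first term and applying $\sum_i t_i w_{ij} = t_j$ collapses it to $\theta \sum_j t_j x_j[K-1] = \theta\,\mathbb{A}[x[K-1]]$. Then using $x_j[K-1] = v_j[K-1] + u_j[K-1]$ and the linearity of $\mathbb{A}[\cdot]$ gives
\[
\mathbb{A}[v[K]] = \theta\,\mathbb{A}[u[K-1]] + \theta\,\mathbb{A}[v[K-1]] + (1-\theta)\mathbb{A}[v[K-1]] = \theta\,\mathbb{A}[u[K-1]] + \mathbb{A}[v[K-1]],
\]
which is the first equality.

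For the second equality, I would simply iterate the first one down to $k=0$ (an easy induction): each application of the one-step identity peels off one $\theta\,\mathbb{A}[u[k]]$ term and decrements $K$ by one, and after $K$ steps we land on $\mathbb{A}[v[0]]$ plus the telescoped sum $\sum_{k=0}^{K-1} \theta\,\mathbb{A}[u[k]]$. No obstacle is expected to be serious here; the only non-routine step is recognizing that the definition of $t$ via $\lim_k W^k = 1_n t^T$ is equivalent to $t$ being a stationary left distribution of $W$, so that the weighted average $\mathbb{A}[\cdot]$ is invariant under the averaging map $v \mapsto Wv$. The rest is bookkeeping.
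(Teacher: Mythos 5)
Your proof is correct and follows essentially the same route as the paper's: both substitute the dynamics \eqref{eq: lie dynamic} into the weighted average, use the stationarity of the weights ($\sum_i t_i w_{ij}=t_j$, i.e. $T^{\top}W=T^{\top}$, which the paper states in matrix form) to collapse the first term, split $x=v+u$, and then telescope. Your version is slightly more explicit about why $\lim_k W^k = 1_n t^{T}$ implies the stationarity identity, but the argument is the same.
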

\begin{proof}
Let $T=\left[t_{i}\right]$, $W=\left[w_{ij}\right]$, $V[k]=\left[v_{i}[k]\right]$,
$X[k]=\left[x_{i}[k]\right]$ and $U[k]=\left[u_{i}[k]\right]$. By
definition 
\[
\mathbb{A}[v[k+1]]=T^{\top}V[k+1]
\]
Substitute \eqref{eq: lie dynamic} into the above equation. We obtain
\[
\mathbb{A}[v[k+1]]=T^{\top}\left(\theta WX[k]+(1-\theta)V[k]\right)
\]
By the definition of $u_{i}$ in the last subsection, it holds that
\[
\begin{split}\mathbb{A}[v[k+1]] & =T^{\top}\left(\theta W(V[k]+U[k])+(1-\theta)V[k]\right)\\
 & =T^{\top}\theta WV[k]+T^{\top}\theta WU[k]+(1-\theta)T^{\top}V[k]
\end{split}
\]
According to the definition of $t_{i}$, the influence among players
satisfies $\lim_{k\rightarrow\infty}W^{k}=1_{n}\left[\begin{array}{cccc}
t_{1} & t_{2} & \cdots & t_{n}\end{array}\right]$, i.e., $T^{\top}W=T^{\top}$. Therefore 
\[
\begin{split}\mathbb{A}[v[k+1]] & =T^{\top}\theta V[k]+T^{\top}\theta U[k]+(1-\theta)T^{\top}V[k]\\
 & =T^{\top}V[k]+\theta T^{\top}U[k]\\
 & =\sum_{k}\theta\mathbb{A}[u[k]]+\mathbb{A}[v[0]]
\end{split}
\]
\end{proof}
From Lemma \ref{lem:v to u}, we obtain 
\[
\begin{split} & \arg\min_{u_{i}[k],k=1,2,\cdots}p_{i}\cdot\boldsymbol{1}^{T}\sum_{k=1}^{K}\text{var}\left[u[k]\right]-d_{i}^{T}\mathbb{A}[v[K]]\\
= & \arg\min_{u_{i}[k],k=1,2,\cdots}\sum_{k=1}^{K}\left(p_{i}\cdot\boldsymbol{1}^{T}\text{var}\left[u[k]\right]-\theta d_{i}^{T}\mathbb{A}[u[k]]\right)
\end{split}
\]

That says that the optimal strategy is indeed a myopic strategy. Therefore,
one can seek to find the optimal strategy step-by-step. In step $k$,
it holds that

\[
\begin{split} & \arg\min_{u_{i}[k]}p_{i}\cdot\boldsymbol{1}^{T}\text{var}\left[u[k]\right]-\theta d_{i}^{T}\mathbb{A}[u[k]]\\
= & \arg\min_{u_{i}[k]}p_{i}\cdot\boldsymbol{1}^{T}\left[\sum_{j}t_{j}\left(u_{j}[k]\right)^{2}-\left(\sum_{j}t_{j}u_{j}[k]\right)^{2}\right]\\
 & -\theta d_{i}^{T}\sum_{j}t_{j}\cdot u_{j}[k]
\end{split}
\]

Set the first derivative with respect to $u_{i}$ to zero

\begin{equation}
2p_{i}\left(t_{i}u_{i}[k]-t_{i}\left(\sum_{j}t_{j}u_{j}[k]\right)\right)=t_{i}d_{i}\theta\label{eq:first order condition}
\end{equation}

The above equation defines the best strategy of player $i$ given
the actions of the other players. The linear equations above can be
used to solve for pure strategy Nash equilibrium. Note that the coefficient
matrix of the above linear equations has the rank of $n-1$, so there
are multiple Nash equilibria.

Suppose for now that the weight $p_{i}$ of disutility is proportional
to player $i$'s influence $t_{i}$ in the network, i.e. $p_{i}\propto t_{i}$.
Further because of the property of the Shapley value, $\sum_{j}d_{j}=0$,
one can obtain a solution of \eqref{eq:first order condition} as

\begin{equation}
u_{i}[k]=\frac{d_{i}\theta}{2p_{i}}\label{eq:optimal strategy}
\end{equation}

The above solution is a pure strategy Nash-equilibrium, and it yields
$\forall i,u_{i}\neq0$, and $\sum_{i}t_{i}u_{i}[k]=0$. In addition,
at the equilibrium, $v_{i}[k]$ will converge, but not achieve consensus.
The larger the value of $p_{i}$ is, the smaller the opinion divergence
is, and the more likely that $d_{i}(\hat{v}[\infty])$ is a stable
coalition for all players. 
\begin{rem}
In practice, the assumption of $p_{i}\propto t_{i}$, i.e., the weight
$p_{i}$ of disutility is proportional to player $i$'s influence
$t_{i}$, implies that more responsible players are placed at more
important positions in a network. 
\end{rem}
\begin{defn}
A coalitional game with information exchange is efficient if there
exists a Nash equilibrium such that the average opinion is constant,
i.e. $\hat{v}[k]$ is constant for all time instances $k$. 
\end{defn}
\begin{thm}
\label{thm:invariant}In the fully rational risk-averse player scenario,
i.e., opinion dynamics follows \eqref{eq: lie dynamic} and strategic
players minimize the objective function \eqref{eq: objective}, the
coalitional game with information exchange is efficient if $p_{i}\propto t_{i}$
over all players. 
\end{thm}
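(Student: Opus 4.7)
The plan is to leverage the machinery already built in the excerpt, in particular Lemma \ref{lem:v to u} and the first-order condition \eqref{eq:first order condition}, to exhibit an explicit pure-strategy Nash equilibrium at which the weighted average opinion is preserved from one step to the next. Concretely, I will show that the candidate strategy profile $u_i[k]=d_i\theta/(2p_i)$ (equation \eqref{eq:optimal strategy}) is a Nash equilibrium whenever $p_i\propto t_i$, and that this profile makes $\mathbb{A}[u[k]]=0$ at every step, so that by Lemma \ref{lem:v to u} the average opinion $\hat v[k]$ never drifts.

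The first step is to recall that Lemma \ref{lem:v to u} turns the original multi-stage minimization \eqref{eq: objective} into a sum of one-shot problems in $u[k]$, as the authors already note. Hence it suffices to treat a single stage. For each fixed $k$, differentiating the per-stage cost in $u_i[k]$ yields exactly the first-order condition \eqref{eq:first order condition}; because the per-stage cost is convex quadratic in $u_i[k]$ (the Hessian contribution of player $i$'s own variable is $2p_i t_i(1-t_i)\ge 0$), any profile satisfying \eqref{eq:first order condition} for all $i$ is a best-response equilibrium.

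The second step is to substitute $p_i = c\, t_i$ into \eqref{eq:first order condition} and guess $u_i[k]=d_i\theta/(2p_i)$. Computing $\sum_j t_j u_j[k] = \frac{\theta}{2c}\sum_j d_j = 0$, where the last equality uses the linearity property $\sum_j d_j=0$ of the Shapley allocation, one checks that the bracketed term on the left-hand side of \eqref{eq:first order condition} reduces to $t_i u_i[k]$, giving $2p_i t_i u_i[k] = t_i d_i \theta$, which holds by construction. Thus $u_i[k]=d_i\theta/(2p_i)$ is a pure-strategy Nash equilibrium of the per-stage game, and the same profile works for every $k$.

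The final step is efficiency. Since $\mathbb{A}[u[k]]=\sum_j t_j u_j[k]=0$ for every $k$ at this equilibrium, Lemma \ref{lem:v to u} gives
\[
\hat v[K]=\mathbb{A}[v[K]]=\mathbb{A}[v[0]]+\theta\sum_{k=1}^{K}\mathbb{A}[u[k]]=\mathbb{A}[v[0]],
\]
so $\hat v[k]$ is constant in $k$, which is exactly the definition of an efficient coalitional game with information exchange. The only subtle point is that \eqref{eq:first order condition} has a rank-$(n-1)$ coefficient matrix and therefore admits a one-parameter family of equilibria; the content of the theorem is not uniqueness but existence of at least one efficient equilibrium, and the explicit profile above supplies it. I expect the main (minor) obstacle to be making the convexity/best-response verification fully explicit, so as to justify calling the stationary point a Nash equilibrium rather than merely a critical point of the coupled first-order system.
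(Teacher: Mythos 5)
Your proposal is correct and follows essentially the same route as the paper's own proof: reduce to per-stage problems via Lemma \ref{lem:v to u}, verify that $u_i[k]=d_i\theta/(2p_i)$ solves \eqref{eq:first order condition} using $\sum_j d_j=0$ so that $\sum_j t_j u_j[k]=0$, and conclude invariance of $\hat v[k]$. Your added convexity check ($2p_i t_i(1-t_i)\ge 0$) to justify that the critical point is a genuine best response is a small but worthwhile tightening that the paper leaves implicit.
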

\begin{proof}
Let $T=\left[t_{i}\right]$, $W=\left[w_{ij}\right]$, $V[k]=\left[v_{i}[k]\right]$,
$X[k]=\left[x_{i}[k]\right]$ and $U[k]=\left[u_{i}[k]\right]$. By
Lemma \ref{lem:v to u}, 
\[
\hat{v}[k+1]=T^{\top}V[k]+\theta T^{\top}U[k]
\]
Given $p_{i}\propto t_{i}$, the optimal strategy for each rational
risk-averse player is given by \eqref{eq:optimal strategy}. Because
the solution \eqref{eq:optimal strategy} satisfies $\sum_{i}t_{i}u_{i}[k]=0$,
i.e. $T^{\top}U[k]=0$, we find that 
\[
\hat{v}[k+1]=T^{\top}V[k]=\hat{v}[k]
\]
is invariant over time steps $k$. 
\end{proof}

\subsection{Existence of stable coalition \label{subsec:Existence-of-stable}}

This subsection discusses conditions for non-empty Bayesian core.
Assuming that consensus is achieved, Proposition \ref{lem: core for consensus}
gives a sufficient condition for which a stable coalition exists. 
\begin{prop}
\label{lem: core for consensus} Suppose that consensus is achieved.
There is a stable coalition under the consensused payoff function
if all of the prior payoff functions $v_{i}[0]$ and all the reported
payoff functions $x_{i}[k]$ are supermodular set functions. 
\end{prop}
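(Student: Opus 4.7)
The plan is to mirror the proof of Proposition \ref{lem: truth telling core nonempty} almost verbatim, since the only substantive change in the dynamics \eqref{eq: lie dynamic} is the introduction of the reported opinions $x_j[k]$ in place of $v_j[k]$, together with the extra $(1-\theta)v_i[k-1]$ term. The key observation is that \eqref{eq: lie dynamic} still expresses $v_i[k]$ as a convex combination of previously available supermodular objects.

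First, I would argue by induction on $k$ that $v_i[k]$ is supermodular for every player $i$ and every step $k\ge 0$. The base case $k=0$ is immediate from the hypothesis on $v_i[0]$. For the inductive step, rewrite \eqref{eq: lie dynamic} as
\[
v_i[k] \;=\; \sum_j (\theta w_{ij})\, x_j[k-1] \;+\; (1-\theta)\, v_i[k-1].
\]
Since $\theta\in(0,1)$, $w_{ij}\ge 0$, and $\sum_j w_{ij}=1$, all coefficients are non-negative and sum to $1$. By hypothesis each $x_j[k-1]$ is supermodular, and by the inductive hypothesis $v_i[k-1]$ is supermodular. The set of supermodular set functions on the finite ground set $N$ forms a convex cone (this is just additivity of the defining inequality in Definition 1, condition 2), hence a non-negative linear combination of supermodular functions is supermodular. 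Therefore $v_i[k]$ is supermodular.

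Next I would pass to the limit. The assumption that consensus is achieved means that $v_\infty := \lim_{k\to\infty} v_i[k]$ exists and is independent of $i$. The set of supermodular functions is defined by finitely many weak linear inequalities on the finite-dimensional vector of set values, hence it is a closed subset of $\mathbb{R}^{2^{|N|}}$. Therefore the pointwise limit $v_\infty$ of supermodular functions is supermodular.

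Finally, by the classical result from cooperative game theory recalled in Section II.B, a supermodular characteristic function has a non-empty core, which gives the desired stable coalition under $v_\infty$. I do not anticipate any real obstacle: the only mildly delicate point is recognizing that the $x_j[k-1]$ terms enter $v_i[k]$ only through the non-negative weights $\theta w_{ij}$, so that the convex-cone closure argument applies just as cleanly as in the truth-telling case.
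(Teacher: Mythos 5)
Your proposal is correct and follows essentially the same route as the paper's own proof: induction showing each $v_i[k]$ is a non-negative weighted combination of supermodular functions (hence supermodular), closedness of the set of supermodular functions to handle the limit, and the classical fact that supermodularity implies a non-empty core. Your write-up is in fact slightly more careful than the paper's (explicit base case, explicit convex-cone justification), but there is no substantive difference in approach.
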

\begin{proof}
First, we want to show that $\forall k$, $v_{i}[k]$ is supermodular.

Let $k\geq2$ be given. If $v_{i}[k-1]$ is supermodular, then $v_{i}[k]=\theta_{i}\sum_{j}w_{ij}x_{j}[k-1]+(1-\theta_{i})v_{i}[k-1]$,
as a positive weighted average of supermodular set functions, is also
supermodular. In addition, because $v_{i}[1]$ are supermodular, by
induction, we know that $\forall k$, $v_{i}[k]$ is supermodular.

Given that all payoff functions in step $k$ are supermodular, and
because the set of supermodular set functions is a closed set, and
further because $v=\lim_{k\rightarrow\infty}v_{i}[k]$, the consensused
payoff function $v$ is supermodular. Furthermore, because the Shapley
value of a supermodular payoff function is in the stable core, we
reach the conclusion that there is a stable coalition under the consensused
payoff function. 
\end{proof}
In the above proposition, the assumption of achieved consensus may
be too strong in practice. Therefore, in Theorem \ref{thm: non empty},
the assumption of achieved consensus is removed, and a stable coalition
is shown to exist when $p_{o}$ is sufficiently large. 
\begin{thm}
\label{thm: non empty} Assume that $p_{i}\propto t_{i}$ over all
players, and define $p_{o}=p_{i}/t_{i}$. Then $v_{i}[\infty]=\lim_{k\rightarrow\infty}v_{i}[k]$
exists. In addition, if $v_{i}[0]$ is strictly supermodular for any
given set of initial states $v_{i}[0],i\in N$, then $\exists p_{o}>0$
s.t. the Bayesian core is non-empty with subcoalition payoff functions
$v_{i}[\infty]$, that is, the Bayesian core is non-empty after the
opinion consensus process. 
\end{thm}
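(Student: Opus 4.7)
The plan is to combine the Nash-equilibrium analysis of Section~\ref{subsec:Fully-rational-players:} with a continuity argument around the truth-telling limit. At the equilibrium from Theorem~\ref{thm:invariant}, each player's lie $u_i[k] = d_i\theta/(2 p_o t_i)$ is independent of $k$, so substituting $x_j[k-1] = v_j[k-1] + u_j$ into \eqref{eq: lie dynamic} gives the affine recursion $V[k{+}1] = M V[k] + \theta W U$ with $M := \theta W + (1-\theta) I$ stochastic and $U$ a fixed matrix. The eigenvalues of $M$ are $\theta\lambda + (1-\theta)$ for $\lambda$ an eigenvalue of $W$, so $M$ inherits from $W$ a simple eigenvalue at $1$ (with left eigenvector $T$) and all others strictly inside the unit disk. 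Because $T^\top U = 0$ (from the proof of Theorem~\ref{thm:invariant}), the forcing term $\theta W U$ lies entirely in the contracting subspace, while $T^\top V[k] = \hat v[0]^\top$ is preserved for all $k$ by Theorem~\ref{thm:invariant}. Decomposing $V[k] = 1_n \hat v[0]^\top + \tilde V[k]$ with $T^\top \tilde V[k] = 0$, the part $\tilde V[k]$ converges geometrically to a unique fixed point $\tilde V^\star$, which establishes $v_i[\infty] = \lim_{k \to \infty} v_i[k]$.

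Next I would parameterize the limit by $p_o$. Since $U$ scales as $1/p_o$, so does $\tilde V^\star$, giving $v_i[\infty] = \hat v[0] + O(1/p_o)$ uniformly in $i$. In the formal $p_o \to \infty$ limit every player's opinion equals $\hat v[0] = \sum_j t_j v_j[0]$, a positively weighted convex combination of strictly supermodular functions and therefore itself strictly supermodular. Its classical core is nonempty; by strict supermodularity one can pick $g^\star$ in the relative interior so that $\sum_{j\in C} g^\star_j > \hat v[0](C)$ for every $\emptyset \subsetneq C \subsetneq N$ and $\sum_j g^\star_j = \hat v[0](N)$, then shrink $g^\star$ coordinatewise by an arbitrarily small amount to also obtain $\sum_j g^\star_j < \hat v[0](N)$. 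Let $\epsilon$ be the minimum of all resulting slacks. Since the constraints \eqref{eq:better off}--\eqref{eq:budget} depend continuously on the $v_i[\infty]$, for $p_o$ large enough that $\|v_i[\infty] - \hat v[0]\|_\infty < \epsilon$ for every $i$, the same $g^\star$ remains feasible under every perturbed $v_i[\infty]$, so the Bayesian core is nonempty.

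The main obstacle is the two-sided tension across heterogeneous opinions: the better-off inequalities push $\sum_{j\in C} g_j$ upward toward $\max_i v_i[\infty](C)$ while the budget inequality pushes $\sum_j g_j$ downward toward $\min_i v_i[\infty](N)$, and without a strict margin these could be jointly infeasible. Strict supermodularity of $\hat v[0]$ is precisely what creates the open, simultaneously feasible region at the unperturbed problem that survives the $O(1/p_o)$ disagreements. A small bookkeeping point to verify is that all $t_j$ are strictly positive, so that $\hat v[0]$ inherits strict supermodularity; if some $t_j = 0$ one restricts the averaging to the influential subset of players while still pinning the peripheral $v_i[\infty]$ near $\hat v[0]$ via the same $O(1/p_o)$ estimate.
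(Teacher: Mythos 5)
Your proposal is correct and follows essentially the same route as the paper: substitute the equilibrium lies $u_i=d_i\theta/(2p_o t_i)$ to get the affine recursion $V[k+1]=\overline{W}V[k]+\theta WU$, use $T^{\top}U=0$ so that the forcing acts only on the contracting spectral subspace of $\overline{W}=\theta W+(1-\theta)I$ (the paper does this via the explicit $S_1+S_2$ diagonalization and a geometric series, you via a fixed point of the contraction --- the same computation), conclude $v_i[\infty]=\hat v[0]+O(1/p_o)$, and finish with strict supermodularity of $\hat v[0]$ plus a continuity/interior-point argument. Your handling of the budget constraint by shrinking $g^{\star}$ and your remark about $t_j=0$ are slightly more careful than the paper's appeal to the Shapley value lying in the interior of the core, but they do not change the argument.
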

\begin{proof}
Because $\forall i\in N$, $v_{i}[0]$ is strictly supermodular, $\hat{v}[0]$,
the weighted average of all $v_{i}[0]$, is also strictly supermodular.
Further because $p_{i}\propto t_{i}$ over all players, the average
opinion $\hat{v}[k]$ is invariant over time step $k$, hence $\hat{v}[k]=\hat{v}[0]$
is strictly supermodular.

Moreover, given the optimal strategy solution \eqref{eq:optimal strategy},
one can rewrite the system dynamics of \eqref{eq: lie dynamic} as
\[
v_{i}[k]=\theta\sum_{j}w_{ij}\left(v_{j}[k-1]+\frac{d_{i}\theta}{2p_{o}t_{i}}\right)+(1-\theta)v_{i}[k-1]
\]
Define $V[k]=\left[v_{i}[k]\right]$, $X[k]=\left[x_{i}[k]\right]$,
$U[k]=\left[\frac{d_{i}\theta}{2p_{o}t_{i}}\right]$, $W=\left[w_{ij}\right]$
and $\overline{W}=\theta W+(1-\theta)I$. Because $W$ is a stochastic
matrix, the limits $\lim_{k\rightarrow\infty}W^{k}$ and $\lim_{k\rightarrow\infty}\overline{W}^{k}$
exist and are equal to each other. We define $T=\lim_{k\rightarrow\infty}W^{k}=\lim_{k\rightarrow\infty}\overline{W}^{k}$.
A stochastic matrix has a eigenvalue equals 1 and all other eigenvalues
inside the unit disk, so one can define an eigenvalue decomposition
$\overline{W}=D^{-1}\left[S_{1}+S_{2}\right]D$, where $S_{1}=\left[\begin{array}{cccc}
1 & 0 & \cdots & 0\\
0 & 0 & \cdots & 0\\
\vdots & \vdots & \ddots & \vdots\\
0 & 0 & \cdots & 0
\end{array}\right]$ and $S_{2}$ is a diagonal matrix with the first entry 0 and all
other entries inside the unit disk. The system dynamics is given by
\[
\begin{split}V[k] & =\theta W\left(V[k-1]+U\right)+(1-\theta)V[k-1]\\
 & =\overline{W}V[k-1]+\theta WU\\
 & =\overline{W}^{k}V[0]+\theta W\left(I+\overline{W}+\overline{W}^{2}+\cdots+\overline{W}^{k-1}\right)U
\end{split}
\]
If we further consider the eigenvalue decomposition $\overline{W}=D^{-1}\left[S_{1}+S_{2}\right]D$,
we obtain 
\[
\begin{split}V[k] & =\overline{W}^{k}V[0]+\theta WD^{\top}\left[I+\left(S_{1}+S_{2}\right)+\left(S_{1}+S_{2}\right)^{2}\right.\\
 & \left.+\cdots\left(S_{1}+S_{2}\right)^{k-1}\right]DU\\
 & =\overline{W}^{k}V[0]+\theta WD^{\top}\left[I+S_{2}+S_{2}^{2}+\cdots S_{2}^{k-1}\right]DU\\
 & +(k-1)\theta WTU
\end{split}
\]
Because the solution \eqref{eq:optimal strategy} satisfies $\sum_{i}t_{i}u_{i}=0$,
i.e. $TU=0$, it holds that 
\[
\begin{split}V[k] & =\overline{W}^{k}V[0]+\theta WD^{\top}\left[I+S_{2}+S_{2}^{2}+\cdots S_{2}^{k-1}\right]DU\end{split}
\]
Considering the fact that $S_{2}$ is a diagonal matrix with all entries
in the unit disk, when $k\rightarrow\infty$, the series $I+S_{2}+S_{2}^{2}+\cdots S_{2}^{k-1}$
converges to $(1-S_{2})^{-1}$. Further because $\lim_{k\rightarrow\infty}W^{k}=T$,
we obtain 
\[
\lim_{k\rightarrow\infty}V[k]=TV[0]+\theta WD^{\top}(1-S_{2})^{-1}DU
\]
Recall that we have $U=\left[\frac{d_{i}\theta}{2p_{o}t_{i}}\right]$,
hence, 
\[
\lim_{p_{o}\rightarrow\infty}\lim_{k\rightarrow\infty}V[k]=TV[0]
\]
i.e., 
\[
\lim_{p_{o}\rightarrow\infty}v_{i}[\infty]=\hat{v}[\infty]
\]

Given that $\hat{v}[\infty]$ is strictly supermodular, the Shapley
value $d(\hat{v}[\infty])$ is in the interior of the core of the
coalitional game with subcoalition payoff functions $\hat{v}[\infty]$.
For each player $i$, $\lim_{p_{o}\rightarrow\infty}v_{i}[\infty]=\hat{v}[\infty]$
and the mapping from $v_{i}[\infty]$ to the core is continuous. Hence,
for sufficiently large $p_{o}$, $d(\hat{v}[\infty])$ is also in
the core of the coalitional game with subcoalition payoff function
$v_{i}[\infty]$, concluding the proof. 
\end{proof}

\section{Algorithmic playing\label{sec:Algorithmic-playing}}

In the above analysis, we made the assumption that all of the players
are rational. In the case in which all players are fully rational
and risk-averse, an equilibrium exists and convergence can be achieved.
However, from a single player\textquoteright s perspective, he or
she does not have control over how the other players play. What should
a player do if he or she is fully rational while others are not? In
this section, we show that an R-learning algorithm \cite{sutton1998reinforcement}
can provide such a player the best strategy.

\subsection{R-Learning Formulation}

At each step $k$, the reward of a rational player is given by:

\[
r_{i}[k]=-p_{i}\cdot\boldsymbol{1}^{T}\text{var}\left[u[k]\right]+\theta d_{i}^{T}\mathbb{A}\left[u[k]\right]
\]

Define $s[k]=\mathbb{A}\left[v[k]\right]$ as the state and $u_{i}$
as the action of each player $i$. For convenience, let $i-$ denote
the players other than player $i$. Furthermore, because all of the
state variables and action variables are continuous, a model of environment
(i.e., action pattern of players $i-$) with finite parameters must
be defined prior to the learning process. Thus we define the environment
$e=\left[\begin{array}{c}
\mathbb{A}\left[u_{i-}[k]\right]\\
\text{var}\left[u_{i-}[k]\right]
\end{array}\right]$, and the environment model $p_{e}(e|s,\text{var}\left[u[k-1]\right];\Phi)$
as the probability distribution of $e$ given state $s$, parameterized
by $\Phi$. Note $\Phi$ is the set of finite environment parameters
to be learned. In addition, the environment $e$ is independent of
current action $u_{i}$, but the rewards and next state are functions
of environment $e$ and the action $u_{i}$. 

\begin{figure}[H]

\includegraphics[width=0.9\columnwidth]{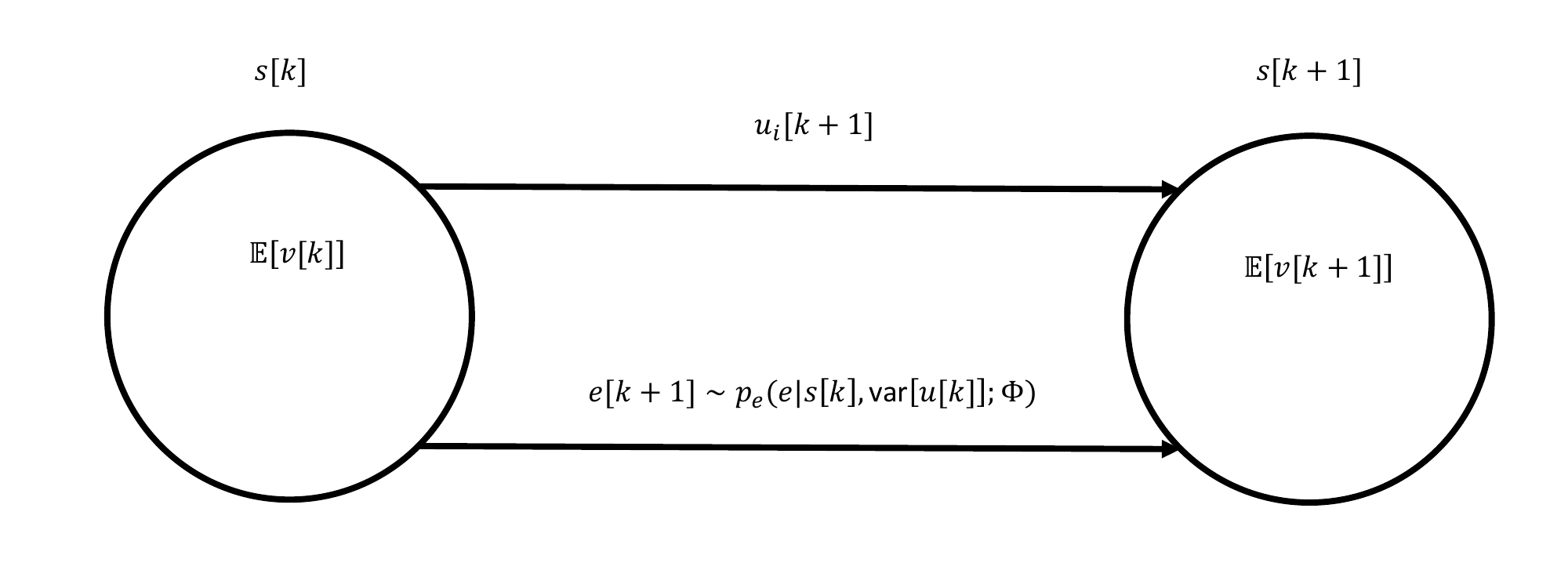}\caption{States transition of R-learning}

\end{figure}

One may find it problematic that the states and the associated rewards
are not observable for player $i$, hence the learning process cannot
proceed unless $\text{var}\left[u[k]\right]$ and $\mathbb{A}\left[v[k]\right]$
are broadcast centrally. Furthermore, $\mathbb{A}\left[v[k]\right]$
cannot be obtained so even a central broadcast would be problematic.
However, the rewards in each step depend only on the decision variable
and environment, but not directly on any state variable; i.e., the
impact of state variables only goes into the system via environment
$e$. As a result, the choice of state variables in the R-learning
process depends only on how $i-$ players are modeled, and it is possible
to choose state variables other than $\mathbb{A}\left[v[k]\right]$,
e.g. $\mathbb{A}\left[x[k]\right]$. 
\begin{rem}
In Section \ref{sec:Algorithmic-playing}, if everyone is rational
and adopts the R-learning algorithm, then \eqref{eq:optimal strategy}
is the optimal strategy. In this case, although Section \ref{sec:Algorithmic-playing}
and Subsection \ref{subsec:Fully-rational-players:} have the same
objective function and the same optimal strategy, some assumptions
are different, i.e., Subsection \ref{subsec:Fully-rational-players:}
assumes that all players know that all players are rational. However,
Section \ref{sec:Algorithmic-playing} does not have this assumption,
but it requires that $\text{var}\left[u[k]\right]$ and $\mathbb{A}\left[v[k]\right]$
can be broadcast centrally. 
\end{rem}
\begin{rem}
The learning process justifies the need for gradual consensus of opinion,
i.e., the participants learn each other\textquoteright s patterns
during the consensus process.
\end{rem}

\subsection{Simulations}

As an illustrative example, first, we look at a two-player coalitional
game with opinion exchange. Suppose player 2's expressed opinion is
quasilinear in its true opinion and depends on the mean opinion, i.e.
$x_{2}[k+1]=v_{2}[k+1]+f\left(\mathbb{A}\left[x[k]\right]\right)+w$
where $w$ is white noise. Further, assume that player 1 is a risk-averse,
rational player as defined in Subsection \ref{subsec:Fully-rational-players:},
and uses an R-learning algorithm to learn the $f(\cdot)$ function
during the opinion consensus process to maximize his or her own utility
in the coalitional game. 

From player 1's perspective, his or her optimal strategy is given
by the solution of \eqref{eq:first order condition}

\[
2p_{1}\left(u_{1}[k]-\left(t_{1}u_{1}[k]+t_{2}u_{2}[k]\right)\right)=d_{1}\theta
\]
\[
u_{1}^{*}[k+1]=\frac{d_{1}\theta}{2p_{1}(1-t_{1})}+\frac{t_{2}\bar{f}\left(\mathbb{A}\left[x[k]\right]\right)}{1-t_{1}}
\]
where $\bar{f}\left(\cdot\right)$ is player 1's estimate of $f\left(\cdot\right)$. 

In the simulation, assume that $v(\left\{ 1,2\right\} )=1$, $v(\left\{ \emptyset\right\} )=0$
, $v_{i}[k]=\left[\begin{array}{c}
v(\left\{ 1\right\} )\\
v(\left\{ 2\right\} )
\end{array}\right]$. Let the initial conditions be $v_{1}[0]=\left[\begin{array}{c}
0.7\\
0.1
\end{array}\right]$ and $v_{2}[0]=\left[\begin{array}{c}
0.3\\
0.5
\end{array}\right]$. Furthermore, let the system parameters be $\theta=0.1$, $W=\left[\begin{array}{cc}
0.3 & 0.7\\
0.4 & 0.6
\end{array}\right]$. Player 1 has the probability of $\gamma=0.5$ of implementing the
optimal strategy given its current estimate $\bar{f}\left(\cdot\right)$
(exploitation), and this player has the probability of $1-\gamma$
of carrying out exploration. Further assume that both players are
rational and risk-averse, but do not know that their opponents are
rational. The coalitional game with information exchange in this case
will be efficient, i.e. $d(\hat{v}[k])$ is invarient over $k$, as
shown in the example in Figure \ref{fig:True-opinion-}. 

\begin{figure}
\begin{centering}
\includegraphics[width=0.42\textwidth]{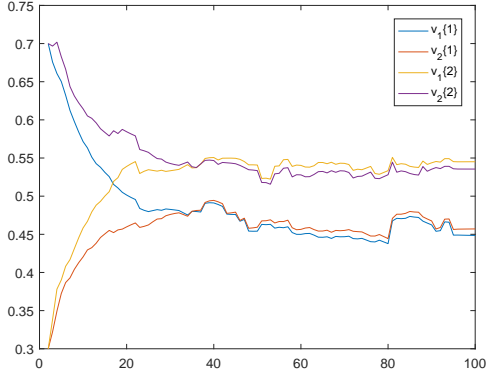} 
\par\end{centering}
\vspace{-10pt}

\caption{True opinion $v[k]$ when $\gamma=0.8$. Here both players are rational
and risk-averse. Both players are doing R-learning to learn the behavior
of their opponents. \label{fig:True-opinion-}}

\vspace{-10pt}
\end{figure}

\section{Real world applications}

Deciding equity distribution is a critical step in forming a startup
company \cite{rose2016research}. For a long period of time, it has
been regarded as a problem that is often solved case by case relying
on experience. For example, \cite{rose2016research} suggests that
equity distribution should consider \textquotedblleft past and future
contributions,\textquotedblright{} but those contributions are very
subjective. To avoid this subjectivity, \cite{spolsky2011} argues
that everyone who joins the startup at the same time should receive
equal shares.

Recent years there are some theories and practices trying to deal
with this problem in a systematic way. To date, the most suitable
theory is the Shapley value in coalitional game theory, in which payoffs
are distributed according to the contribution of each of the sub-coalitions
and the three axioms of fairness \cite{shapley1988value}. An online
tool, \textquotedblleft Startup Equity Calculator,\textquotedblright{}
\cite{ha_2010} implements this idea by asking the question \textquotedblleft What
will the company look like without this particular founder?\textquotedblright ,
which essentially evaluates the contribution of each of the sub-coalitions.

However, the above ideas of the Shapley value assume that everyone
will agree on the contribution of each of the sub-coalitions. In practice,
different people have different opinions about the contribution of
each of the sub-coalitions, hence a coalitional game theory with incomplete
information is required, such as the one in this paper.

As another example, in the United States, passing legislation requires
substantial effort and extensive lobbying and debates, and the same
game is not played repeatedly. Thus, a repeated game model in the
paper \cite{chalkiadakis2004bayesian} is not applicable. In contemporary
U.S. politics, in addition, it is usually the case that the Bayesian
core, defined in \cite{ieong2008bayesian} does not exist, because
the two parties have strong prejudices about each other. Moreover,
the opinion exchange process affects the outcomes substantially, so
a model, such as an opinion consensus model, is required to capture
its effect. At the end, each Senator and Representative has her or
his own interests and cares almost exclusively about the welfare of
his or her own constituents. 

\section{Conclusions and Future work}

In this paper, a new framework for coalitional games is presented
with an unrealized subset payoff function and information exchange
among players. The framework creates an interplay between the traditional
model of the coalitional game and the opinion consensus model. Many
interesting implications arise from the new framework, including the
sufficient condition of non-stable core and the sufficient condition
of efficient information exchange. Furthermore, the case of algorithmic
learning players was studied, and the results were compared and connected
to the case of pure rational players. 

In the future, the dependency of equilibrium on the topology of the
opinion consensus network may be considered. It is clear that different
communication topologies will result in different steady states. From
the perspective of an investor in the business scenario, there is
a need to design a communication topology and rule (mechanism) that
ensures truth telling. From the perspective of the participants, questions
may arise concerning with whom and in what order should issues be
addressed to ensure favorable outcomes. Additional future work that
is needed is related to algorithmic learning. In this approach, quantizing
the rewards associated with the possible \textquotedblleft exit\textquotedblright{}
action of each player also could be considered. 

\bibliographystyle{IEEEtran}
\bibliography{reference}

\end{document}